\renewcommand{\R}{\mathbb R}
\newcommand{\Id}{\op{\mathsf Id}}
\newcommand{\supp}{\op{supp}}
\newcommand{\ep}{\epsilon}
\let\E\relax
\DeclareMathOperator*{\E}{\mathbb E}
\let\co\relax
\DeclareMathOperator*{\co}{co}
\let\ds\displaystyle
\let\op\operatornamewithlimits
\let\eps\epsilon
\let\ip\ev
\let\cite\citep
\newtheorem{theorem}{Theorem}
\numberwithin{theorem}{section}
\newtheorem*{theorem*}{Theorem}
\newtheorem{lemma}[theorem]{Lemma}
\newtheorem{proposition}[theorem]{Proposition}
\newtheorem{corollary}[theorem]{Corollary}
\theoremstyle{definition}
\newcommand{\ind}[1]{\mathbbm{1}\qty{#1}}
\newcommand{\ie}{{\em i.e.}\xspace}
\newcommand{\eg}{{\em e.g.}\xspace}
\newcommand{\nb}[3]{}
\newcommand{\br}[1]{\nb{Brian}{blue}{#1}}
\newcommand{\noah}[1]{\nb{Noah}{purple}{#1}}
\def\va{{\bm{a}}}
\def\ve{{\bm{e}}}
\def\vu{{\bm{u}}}
\def\vx{{\bm{x}}}
\let\vec\bm
\def\cA{{\mathcal{A}}}
\def\cD{{\mathcal{D}}}
\def\cO{{\mathcal{O}}}
\def\cT{{\mathcal{T}}}
\def\cX{{\mathcal{X}}}
\def\tO{{\widetilde{\cO}}}
\def\thanks#1{\protected@xdef\@thanks{\@thanks
        \protect\footnotetext{#1}}}
\title{A Lower Bound on Swap Regret in Extensive-Form Games}
\author{
    Constantinos Daskalakis\thanks{Authors alphabetically ordered.} \\
    \texttt{costis@csail.mit.edu} \\
    MIT  \\ Archimedes AI
\and
    Gabriele Farina \\
    \texttt{gfarina@mit.edu} \\
    MIT
\and
    Noah Golowich \\
    \texttt{nzg@mit.edu} \\
    MIT
\and
    Tuomas Sandholm \\
    \texttt{sandholm@cs.cmu.edu} \\
    Carnegie Mellon University\\
    Strategic Machine, Inc. \\
    Strategy Robot, Inc. \\
    Optimized Markets, Inc. 
\and
    Brian Hu Zhang\\
    \texttt{bhzhang@cs.cmu.edu} \\
    Carnegie Mellon University
}
\date{\today}
\begin{document}

\maketitle

\begin{abstract}\noindent
    Recent simultaneous works by \citet{Peng23:Fast} and \citet{Dagan23:External} have demonstrated the existence of a no-swap-regret learning algorithm that can reach $\epsilon$ average swap regret against an adversary in any extensive-form game within $m^{\tO(1/\eps)}$ rounds, where $m$ is the number of nodes in the game tree. However, the question of whether a $\poly(m, 1/\eps)$-round algorithm could exist remained open. In this paper, we show a lower bound that precludes the existence of such an algorithm. In particular, we show that achieving average swap regret $\eps$ against an oblivious adversary in general extensive-form games requires at least $\exp(\Omega\qty(\min\qty{m^{1/14}, \eps^{-1/6}}))$ rounds. 
\end{abstract}

\section{Introduction}

{\em No-regret learning} is a popular framework for modeling situations in which an agent faces an arbitrary, possibly adversarial environment. The agent seeks to minimize its {\em regret}, which is the difference between the utility it has earned and the maximum utility it could have earned by changing its strategy according to some {\em strategy transformation function}. The more strategy transformations that are allowed, the tighter the resulting notion of regret. In sequential (extensive-form) imperfect-information games, especially adversarial games, algorithms based on no-regret learning have been pivotal in leading to superhuman performance in games ranging from poker~\cite{Moravvcik17:DeepStack,Brown18:Superhuman,Brown19:Superhuman} to {\em Diplomacy}~\cite{Meta22:Human}.

In games, there is a well-studied and tight connection between no-regret learning in games and solution concepts involving {\em correlation}. In particular, if all players in a game play according to a no-regret learning algorithm that minimizes a certain notion of regret, their average strategy profile will converge to a notion of correlated equilibrium that corresponds to the class of strategy transformations for that regret notion. Notable correlated solution concepts, and their corresponding sets of deviations (notions of regret), can be found in \Cref{tab:eqconcepts}. 

\begin{table}[htp]
\centering
\scalebox{.86}{%
    \begin{tabular}{@{}lllm{4.4cm}@{}}
        \bf Set of deviations & \bf Equilibrium concept & \multicolumn{2}{l}{\bf Best known no-regret guarantee} \\
        \toprule
        Constant functions (``External regret'') & Normal-form coarse correlated (NFCCE) & $m/\eps^2$ &\cite{Farina22:Simple} \\ \midrule
        \makecell[l]{``Trigger'' functions \\ Linear functions (``Linear swap regret'')}  & \makecell[l]{Extensive-form correlated (EFCE) \\ Linear-swap correlated (LCE)} & $m^2/\eps^2$ & \cite{Farina22:Simple}\newline \cite{Farina23:Polynomial} \\ \midrule
        All functions (``Swap regret'') & Normal-form correlated (NFCE) &$m^{\tO(1/\eps)}$ & \cite{Peng23:Fast} \newline \cite{Dagan23:External} \\
        \bottomrule
    \end{tabular}
}
\caption{Some examples of notions of regret and corresponding notions of correlated equilibrium for extensive-form games, in increasing order of tightness. The ``best no-regret guarantee'' is the minimum number of iterations $T$ after which an agent can guarantee (average) regret $\eps$ in that notion. $m$ is the size of the game (number of nodes). All algorithms listed have guaranteed $\poly(m, 1/\eps)$ per-iteration time complexity.}\label{tab:eqconcepts}
\end{table}

The tightest notion that can be defined in this manner is known as the {\em normal-form correlated equilibrium} (NFCE)~\cite{Aumann74:Subjectivity}, which corresponds to measuring regret against the set of {\em all possible strategy transformations}---known as {\em swap regret}~\cite{Blum07:External}. It has long been believed that efficiently computing an NFCE or minimizing swap regret is impossible. In fact, the believed hardness of computing NFCE has motivated the development of many more relaxed notions of equilibrium with nicer computational properties, including the extensive-form correlated equilibrium~\cite{Stengel08:Extensive}, linear-swap correlated equilibrium~\cite{Farina23:Polynomial}, and behavioral correlated equilibrium~\cite{Morrill20:Hindsight,Morrill21:Efficient,Zhang24:Outcome}. \noah{does behavioral CE have nicer properties? IIRC we don't know how to compute it}\br{yes :) see above cited paper of mine} Our focus in this paper, however, will be on NFCE itself.

Little was known about the feasibility of swap regret minimization (and, similarly, about computing NFCE) until a recent simultaneous work due to \citet{Peng23:Fast} and \citet{Dagan23:External} which implies that, for extensive-form games with $m$ nodes, there is a no-regret algorithm that achieves (average) swap regret $\eps$ after $m^{\tO(1/\eps)}$ rounds. For extensive-form games, this was the first time that a polynomial-round swap regret minimization algorithm (even for constant $\eps$) had been achieved, and resulted in the first PTAS for computing an NFCE of an extensive-form game. Both papers provide lower bounds, but these lower bounds are only for normal-form (\ie, single-step) games, and do not preclude the existence of efficient swap-regret minimization algorithms for extensive-form games. Thus, there is an exponential gap (in terms of the dependence on $\eps$) between the best regret bounds for swap regret and the best regret bounds for weaker notions of regret. 

In this paper, we show that this exponential gap cannot be closed. In particular, we show the following lower bound on swap regret in extensive-form games.
\begin{theorem}[Main theorem, informal]
    There is no swap regret minimization algorithm for general extensive-form games that requires fewer than $\exp(\Omega\qty(\min\qty{m^{1/14}, \eps^{-1/6}}))$ rounds. 
\end{theorem}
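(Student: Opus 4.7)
The plan is to construct an explicit family of extensive-form games, parameterized by a single size parameter $k$, together with an oblivious adversary sequence of utilities, for which no learning algorithm can drive the average swap regret below $\Theta(k^{-6})$ within $\exp(o(k))$ rounds. I would aim for a construction in which the game tree has $m = \Theta(k^{14})$ nodes, so that substituting $k = \Theta(\min\{m^{1/14}, \eps^{-1/6}\})$ recovers the lower bound in the theorem. The ``$\min$'' structure then arises naturally from two regimes: the gadget saturates either when $k \geq m^{1/14}$ (the tree is already as large as allowed) or when $k \geq \eps^{-1/6}$ (the per-round regret signal is already as small as the target $\eps$).

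For the building block, I would start from the core obstruction in the normal-form swap-regret lower bounds of \citet{Peng23:Fast} and \citet{Dagan23:External}: an adversary can plant a ``hidden target'' and reward deviations that point toward it, forcing the learner to effectively enumerate an exponentially large class of swap functions. The idea is to wrap such a hidden-target gadget inside a multilayered extensive-form tree with $k$ nested stages. At each stage the player reaches a different collection of information sets, and the payoffs are arranged so that (i) a reduced normal-form pure strategy of the player corresponds to a sequence of stage-by-stage guesses, and (ii) a pure-strategy swap function can independently replace the player's action at each stage. With this alignment, the best swap in hindsight corresponds to locally optimizing in every layer, and the swap regret decomposes into a sum of per-layer contributions that the analysis can handle separately.

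The quantitative heart of the argument is showing that any algorithm needs $\exp(\Omega(k))$ rounds to push the total swap regret below $\Theta(k^{-6})$. I would carry this out via a Yao-style argument: fix a distribution over the adversary's hidden targets (e.g.\ uniform at each layer), and use an information-theoretic or hypothesis-testing argument to show that, until the learner has observed enough feedback to decode the hidden targets at a large fraction of layers, the stage-wise optimal swap beats the learner's strategy by at least $\Omega(k^{-6})$ on average. The crucial feature of the extensive-form setting is that the learner's behavior strategy factors across the tree, so it cannot concentrate on the best overall pure strategy without effectively decoding every layer; decoding in turn requires reaching the relevant infoset, whose reach probability under any product policy can be kept at $\exp(-\Omega(k))$ by a suitable choice of tree branching.

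The main obstacle is balancing three competing pressures in choosing the per-layer gadget size: the layer must be large enough that a single layer produces an $\Omega(k^{-c})$ per-layer regret signal, small enough that $k$ such layers fit into $m = O(k^{14})$ total nodes, and rich enough in swap functions that the learner cannot piggyback on low external regret to achieve low swap regret. The unusual exponents $14$ and $6$ in the final bound are essentially the book-keeping constants from optimizing this trade-off and from the normalization needed to make the per-layer regret additive across stages. I expect this optimization, together with carefully proving the additive decomposition of swap regret across layers (which must survive arbitrary correlation between the learner's actions at different infosets), to be the main technical burden of the proof.
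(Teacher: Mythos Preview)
Your proposal takes a very different route from the paper, and as written it has a conceptual gap that would need to be repaired before it could go through.

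The paper does \emph{not} build a deep $k$-layer tree with an additive swap-regret decomposition across stages. It uses a depth-three tree (decision $\to$ observation $\to$ binary decision) and proves the bound by a \emph{reduction} from the normal-form lower bound of \citet{Dagan23:External}. The device is a Johnson--Lindenstrauss-style random embedding: each of the $M$ normal-form actions is mapped to a uniformly random vector in $\{\pm 1/\sqrt{n}\}^n$, and by Hoeffding these $M$ vectors are pairwise $\eps$-near-orthogonal once $n = \Theta(\eps^{-2}\log M)$. Because the normal-form adversary only ever outputs \emph{unit} utility vectors, its image under this embedding is automatically a sequence of \emph{linear} utilities on the tree-form strategy set, which is exactly the structural constraint that makes the extensive-form lower bound nontrivial. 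The partition $\cA = \cA_1 \sqcup \cdots \sqcup \cA_d$ with $d = O(\log T)$ and the ``increasing order within each block'' property of the normal-form adversary are then used to decode any tree-form learner back into a normal-form learner with nearly the same swap regret. The exponents fall out of $\eps = \Theta(d^{-6})$ (the normal-form regret floor) and $m = d\cdot n = O(d^{13}\log M) = O(\log^{14} M)$.

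The gap in your outline is the feedback model. You argue that ``decoding in turn requires reaching the relevant infoset, whose reach probability under any product policy can be kept at $\exp(-\Omega(k))$.'' But this is \emph{full-information} online learning: at every round the learner observes the entire utility vector $\vu^t \in \R^m$, not just the payoff at the leaf it reached, so no exploration or ``reaching'' is needed to learn a hidden target at any layer. Relatedly, the learner plays an arbitrary $\pi^t \in \Delta(\cX)$, not a behavior strategy, so nothing forces product structure across layers. Your plan also does not address what the paper singles out as the central obstacle, namely that the adversary's utilities must be \emph{linear} on the tree-form strategy set $\cX$; the normal-form hidden-target adversary is nonlinear when $\cX$ is viewed as a subset of the $|\cX|$-simplex. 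The paper's random-embedding trick is precisely what bridges this gap, and any layered construction would need an analogous device before the per-layer decomposition you propose could even be posed.
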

Our result implies an exponential gap between swap regret and weaker notions of regret in the realm of extensive-form games. In particular, our result precludes the existence of $\poly(m, 1/\eps)$-time swap regret minimization algorithms. 

\section{Preliminaries}

\paragraph{No-regret learning and $\Phi$-regret.}
In {\em no-regret learning}, a learner with access to a {\em strategy set} $\cX \subset \R^m$ faces an adversarial environment across many rounds. In each round $t = 1, \dots, T$, the learner outputs a distribution $\pi^t \in \Delta(\cX)$ and then the environment outputs a linear \noah{changed from ``affine''} {\em utility function} $u^t : \cX \to [-1, 1]$ which we will henceforth write for convenience as a vector $\vu^t \in \R^m$. The utility function $u^t$ may depend on all the past distributions $\pi^1, \dots, \pi^t$ selected by the learner. The learner then gets utility $\E_{\vx \sim \pi^t}\ip{\vu^t, \vx}$. The environment (i.e., adversary) is said to be \emph{oblivious} if  its choices of utility vectors $\vu^t$ do not depend on the learner's chosen distributions $\pi^s$. 

In this paper, the notion of interest is {\em swap regret}. Intuitively, a learner has low swap regret if it could not have improved its utility by transforming its strategy according to any function $\phi : \cX \to \cX$. More formally, define
\begin{align}
    V(\phi) := \frac{1}{T} \sum_{t=1}^T \E_{\vx \sim \pi^t} \ip{\vu^t, \phi(\vx)}.
\end{align}
Thus, in particular, the total utility experienced by the learner is $V(\Id)$, where $\Id : \cX \to \cX$ is the identity function. After $T$ rounds, the {\em (average) swap-regret} is 
\begin{align}
    \textsc{SwapRegret}(T) := \max_{\phi : \cX \to \cX} V(\phi) - V(\Id) = \max_{\phi : \cX \to \cX} \frac{1}{T} \sum_{t=1}^T  \E_{\vx \sim \pi^t} \ip{\vu^t, \phi(\vx) - \vx}.
\end{align}
The learner's goal is then to achieve small swap regret after a small number of rounds: for example, one may hope to achieve swap regret $\eps$ after $T = \poly(m, 1/\eps)$ rounds.

Other notions of regret, such as those mentioned in the introduction, can be defined by restricting the set of deviations to a set $\Phi \subset \cX^\cX$. Since for this paper we are only interested in swap regret, we will not explore this connection further.

\paragraph{Tree-form decision problems and extensive-form games.}
Tree-form decision problems describe {\em sequential} interactions between the player and the environment. In a tree-form decision problem, there is a rooted tree $\cT$ of {\em nodes}. There are two types of nodes: {\em decision points}, at which the player selects an action, and {\em observation points}, at which the environment selects an observation. At each decision point $j$, actions are identified with outgoing edges, and we use $A_j$ to denote this set. Leaves of $\cT$ are called {\em terminal nodes}. We will use $m$ to denote the number of terminal nodes. 

A {\em pure strategy} is a choice of one action at each decision point. The {\em tree-form} or {\em realization-form representation} of the pure strategy is the vector $\vx \in \{0, 1\}^m$ for which $\vx[z] = 1$ if and only if the player plays {\em all} actions on the path from the root node to terminal node $z$. The set of tree-form pure strategies will be denoted $\cX \subset \{0, 1\}^m$. 
Different pure strategies may have the same tree-form representation, but for our purposes we will only require the tree-form representation of strategies, and therefore we will not distinguish between strategies with the same tree-form representation. A {\em mixed strategy} $\pi \in \Delta(\cX)$ is a distribution over pure strategies.

Tree-form decision problems naturally model the decision problems faced by players in an {\em extensive-form game}. For our purposes, a (perfect-recall) extensive-form game with $n$ players is defined by $n$ tree-form strategy sets $\cX_1, \dots, \cX_n$, and one $n$-linear {\em utility function} $u_i : \cX_1 \times \dots \times \cX_n \to [-1, 1]$ for each player $i \in [n]$, which defines the utility of player $i$ when each player $j \in [n]$ plays strategy $\vx_j \in \cX_j$. 

Our solution concept of interest is the  {\em (normal-form) correlated equilibrium} (NFCE)~\cite{Aumann74:Subjectivity}.  An {\em $\eps$-NFCE} is a distribution $\pi \in \Delta(\cX_1 \times \dots \times \cX_n)$ with the property that no player can profit by applying any function $\phi : \cX_i \to \cX_i$ to their strategy. That is, $\pi$ is an $\eps$-NFCE if
\begin{align}
    \E_{(\vx_1, \dots, \vx_n) \sim \pi } \qty[u_i(\phi(\vx_i), \vx_{-i}) - u_i(\vx_i, \vx_{-i})] \le \eps
\end{align}
for all players $i$ and functions $\phi : \cX_i \to \cX_i$.

Suppose we have an $n$-player game played repeatedly over $T$ rounds. For each round $t \in [T]$, let $\pi^t_i \in \Delta(\cX_i)$ be the mixed strategy played by player $i$. 
\begin{proposition}
Suppose that each player plays according to a no-swap-regret learning algorithm using utility maps $u^t_i : \cX_i \to [-1, 1]$ given by $$u^t_i(\vx_i) = \E_{\vx_{-i} \sim \pi^t_{-i}} u_i(\vx_i, \vx_{-i}^t).$$
Let $\pi^t \in \Delta(\cX_1 \times \dots \times \cX_n)$ be the product distribution whose marginal on $\cX_i$ is $\pi^t_i$. 
If the swap-regret of every player $i$ is bounded by $\eps$, then the average profile $\pi := \frac{1}{T} \sum_{i=1}^T \pi^t$ is an $\eps$-NFCE.
\end{proposition}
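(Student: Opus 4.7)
The plan is a direct verification: I would expand the NFCE deviation condition for each player and reduce it term by term to that player's swap regret bound. Fix a player $i$ and a transformation $\phi_i : \cX_i \to \cX_i$; the goal is to show $\E_{\vx \sim \pi}[u_i(\phi_i(\vx_i), \vx_{-i}) - u_i(\vx_i, \vx_{-i})] \le \eps$.

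The first step is to unroll the time average. By linearity of expectation and the definition of $\pi$, the deviation equals
$$\frac{1}{T}\sum_{t=1}^T \E_{\vx \sim \pi^t}\bigl[u_i(\phi_i(\vx_i), \vx_{-i}) - u_i(\vx_i, \vx_{-i})\bigr].$$
Since $\pi^t = \pi^t_1 \times \cdots \times \pi^t_n$ is a product distribution, $\vx_i$ and $\vx_{-i}$ are independent under $\pi^t$, so the inner expectation factors as $\E_{\vx_i \sim \pi^t_i}\E_{\vx_{-i}\sim\pi^t_{-i}}[\,\cdot\,]$.

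Next I would identify the linearized utility. Using $n$-linearity of $u_i$, for any fixed $\vy_i \in \cX_i$ the map $\vx_{-i}\mapsto u_i(\vy_i,\vx_{-i})$ is multilinear, and for any fixed $\vx_{-i}$ the map $\vy_i\mapsto u_i(\vy_i,\vx_{-i})$ is linear. Consequently $\vy_i \mapsto \E_{\vx_{-i}\sim \pi^t_{-i}}u_i(\vy_i,\vx_{-i})$ is a linear function of $\vy_i$, which is precisely the map $u^t_i$ defined in the proposition; in vector form, $\E_{\vx_{-i}\sim\pi^t_{-i}}u_i(\vy_i,\vx_{-i}) = \langle \vu^t_i, \vy_i\rangle$.

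Applying this identity twice (once with $\vy_i=\phi_i(\vx_i)$ and once with $\vy_i=\vx_i$) reduces the NFCE deviation to
$$\frac{1}{T}\sum_{t=1}^T \E_{\vx_i \sim \pi^t_i}\langle \vu^t_i,\, \phi_i(\vx_i)-\vx_i\rangle,$$
which is exactly the quantity bounded by player $i$'s average swap regret and is therefore at most $\eps$ by assumption. The argument has no real obstacle; it is bookkeeping once one observes that the product structure of $\pi^t$ together with multilinearity of $u_i$ makes the opponents' marginal exactly the linearized utility vector $\vu^t_i$ fed to player $i$'s learner. The one subtle point to keep in mind is that $\vu^t_i$ depends on $\pi^t_{-i}$, which in turn depends on player $i$'s past play, so player $i$ faces an adaptive adversary; this is consistent with the definition of swap regret recalled in the preliminaries, which already permits $\vu^t$ to depend on all previous $\pi^s$.
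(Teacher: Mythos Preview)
Your proposal is correct and is exactly the argument the paper has in mind: the paper simply states that the proof ``follows immediately by comparing the definition of swap regret and the definition of NFCE,'' and your write-up spells out that comparison. Your additional remark about adaptivity of $\vu^t_i$ is a nice observation but is not needed for the bare statement.
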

The proof follows immediately by comparing the definition of swap regret and the definition of NFCE.

\section{Previous Swap Regret Minimization Algorithms}
We now review known results about no-swap-regret learning algorithms.

\subsection{Normal-form games} In a {\em normal-form game}, every player's decision problem consists of a single decision point with $m$ actions, that is, $\cX = \{ \ve_1, \dots, \ve_m\} \subset \{0, 1\}^m$ where $\ve_k$ is the $k$th standard basis vector in $\R^m$. We will abuse language slightly and refer to $\cX$ as the $m$-simplex, even though it is actually the convex hull of $\cX$ that is the $m$-simplex. \citet{Blum07:External} showed that efficient algorithms exist for minimizing swap regret over the simplex.
\begin{theorem}[\citealp{Blum07:External}]
    There exists a no-regret learning algorithm for simplices that achieves average swap regret $\eps$ within $T = \tO(m/\eps^2)$ rounds.
\end{theorem}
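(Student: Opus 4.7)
The plan is to apply the classical Blum--Mansour reduction from swap regret to external regret. I would instantiate $m$ copies of an external regret minimizer (for concreteness, Hedge / multiplicative weights), indexed by the $m$ actions. At round $t$, let copy $i$ output a distribution $\vq_i^t \in \Delta([m])$, and stack these as the rows of a row-stochastic matrix $\mQ^t \in \R^{m \times m}$. The learner then plays a stationary distribution $\vp^t$ of $\mQ^t$, which exists by Perron--Frobenius. Upon observing the utility vector $\vu^t \in [-1,1]^m$, I would feed copy $i$ the scaled utility vector $p_i^t \vu^t$, so that each copy sees ``losses'' of magnitude at most $p_i^t$.

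The crux of the analysis is the identity, immediate from stationarity $\sum_i p_i^t q_{i,j}^t = p_j^t$, that for any swap function $\phi : [m] \to [m]$,
\begin{align}
    \sum_{t=1}^T \sum_{i=1}^m p_i^t \bigl(u^t_{\phi(i)} - u^t_i\bigr) \;=\; \sum_{i=1}^m \Biggl(\sum_{t=1}^T p_i^t u^t_{\phi(i)} \;-\; \sum_{t=1}^T p_i^t \langle \vq_i^t, \vu^t\rangle \Biggr),
\end{align}
where the right-hand side is exactly the sum over $i$ of copy $i$'s external regret against the comparator action $\phi(i)$ on losses $p_i^t \vu^t$. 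Hence the swap regret is upper-bounded by the total external regret of the $m$ copies.

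To finish, I would plug in a per-copy external regret bound. Since copy $i$'s loss coordinates lie in $[-p_i^t, p_i^t]$, a second-order (adaptive) analysis of Hedge yields regret at most $O\bigl(\sqrt{\log m \cdot \sum_t (p_i^t)^2}\bigr)$. Summing over $i$ and applying Cauchy--Schwarz,
\begin{align}
    \sum_{i=1}^m O\!\left(\sqrt{\log m \cdot \sum_{t} (p_i^t)^2}\right) \;\le\; O\!\left(\sqrt{m \log m \cdot \sum_{i, t} (p_i^t)^2}\right) \;\le\; O\bigl(\sqrt{mT \log m}\bigr),
\end{align}
where the final step uses $(p_i^t)^2 \le p_i^t$ together with $\sum_i p_i^t = 1$ to give $\sum_{i,t}(p_i^t)^2 \le T$. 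Dividing by $T$ yields average swap regret $O(\sqrt{m \log m / T})$, and rearranging gives $T = \tO(m/\eps^2)$.

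The main obstacle is securing the $\sqrt{m}$ rather than $m$ dependence on the number of actions: a naive worst-case per-copy bound of $O(\sqrt{T \log m})$ (treating the scaled losses as still lying in $[-1,1]$) would only give $T = \tO(m^2/\eps^2)$. Saving the extra $\sqrt{m}$ factor requires both pieces to work in concert---an adaptive per-copy regret bound whose leading constant scales with $\sum_t (p_i^t)^2$, and the observation that the row sums of $\mQ^t$ weighted by $\vp^t$ collapse because $\sum_i p_i^t = 1$, which lets Cauchy--Schwarz absorb a factor of $\sqrt{m}$. Verifying the precise form of the second-order Hedge bound (e.g., via standard variance-adaptive regret analyses) is where the only real technical work lies; everything else is algebra and the basic existence of a stationary distribution.
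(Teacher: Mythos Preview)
The paper does not give its own proof of this statement; it simply cites \citet{Blum07:External} as background. Your sketch is a correct rendering of the Blum--Mansour reduction and yields the stated $\tO(m/\eps^2)$ bound, so there is nothing to compare against in the paper itself.

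One minor remark: you do not actually need the second-order Hedge bound. Since copy $i$'s loss vector at round $t$ has $\ell_\infty$-norm at most $p_i^t$, a first-order bound of the form $O\bigl(\sqrt{\log m \cdot \sum_t p_i^t}\bigr)$ already suffices, and the same Cauchy--Schwarz step with $\sum_{i,t} p_i^t = T$ gives $O(\sqrt{mT\log m})$ directly. Your second-order route is also valid (and in fact slightly tighter, since $\sum_t (p_i^t)^2 \le \sum_t p_i^t$), but it is worth knowing that the weaker adaptive bound is enough---this is closer to what appears in the original reference.
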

One may wonder whether this is optimal, \eg, whether it is possible to achieve a logarithmic dependence on $m$. Recent simultaneous work by \citet{Dagan23:External} and \citet{Peng23:Fast} has essentially completely answered this question for normal-form games.
\begin{theorem}[\citealp{Dagan23:External,Peng23:Fast}, upper bound]\label{th:ub}
    There exists a no-regret learning algorithm for simplices that achieves average swap regret $\eps$ within $T = (\log m)^{\tO(1/\eps)}$ rounds.
\end{theorem}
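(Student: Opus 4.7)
The plan is to design a recursive algorithm that minimizes swap regret by composing external-regret minimizers in a tree-like structure of controlled depth. The starting point is the classical Blum--Mansour reduction: one runs $m$ external-regret algorithms $A_1, \ldots, A_m$ in parallel, lets $\vq^t_i \in \Delta(\cX)$ be the distribution output by $A_i$ at round $t$, computes the stationary distribution $\vp^t$ of the row-stochastic matrix whose rows are the $\vq^t_i$, plays $\vp^t$, and feeds $\vp^t_i \vu^t$ back to $A_i$. This yields swap regret bounded by $\sum_i \mathsf{ExtReg}(A_i)$ on the scaled losses, giving $O(\sqrt{m T \log m})$ and $T = \tO(m/\eps^2)$. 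The $\sqrt m$ factor comes from summing $m$ square roots of second-moment terms; removing it is the technical crux.

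The improvement pursued in the cited works is a multilevel refinement of this reduction. I would define a family of algorithms $\mathsf{Alg}_K$ parametrized by a recursion depth $K \ge 0$. At depth $K = 0$, $\mathsf{Alg}_0$ plays a trivial distribution and achieves swap regret at most $1$. For $K \ge 1$, $\mathsf{Alg}_K$ maintains $m$ copies of $\mathsf{Alg}_{K-1}$, one ``rooted at'' each action $i \in [m]$, and uses an outer Hedge instance with $m$ experts to mix their current outputs into a single distribution $\vp^t$. The $i$-th inner copy receives a carefully rescaled utility---roughly, the utility seen by $\mathsf{Alg}_K$ restricted to the mass that the outer mix currently places near action $i$---so that swap transformations $\phi : [m] \to [m]$ whose image lies near $i$ are effectively handled by the corresponding inner swap-regret minimizer.

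With this construction, the key invariant to prove takes the schematic form
\begin{equation*}
    \mathsf{SwapReg}(\mathsf{Alg}_K, T) \;\le\; \mathsf{SwapReg}(\mathsf{Alg}_{K-1}, T) + O\qty(\sqrt{\tfrac{\log m}{T}}),
\end{equation*}
where the additive term comes from the external-regret guarantee of the outer Hedge. Iterating the recurrence, depth $K = O(1/\eps)$ combined with $T = \Omega(\log m / \eps^2)$ rounds per Hedge instance drives the swap regret below $\eps$. Since the per-level time multiplies by roughly $\log m / \eps^2$ at each step, the total number of rounds needed is $(\log m / \eps^2)^{O(1/\eps)} = (\log m)^{\tO(1/\eps)}$, matching the theorem; and each top-level iteration is polynomial-time because it reduces to one Hedge update and one convex combination at each of the $K$ levels.

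The main obstacle is closing the recursion with only an additive (rather than multiplicative) loss per level. Naively feeding raw utilities to inner copies reproduces the Blum--Mansour bound, keeping a $\sqrt m$ factor intact at every level and quickly blowing the regret up. The subtle step is choosing the scaling of the inner utility vectors so that any swap function $\phi$ admits a decomposition into (i) a coarse component absorbed by the outer external-regret guarantee and (ii) a fine component absorbed by exactly one inner $\mathsf{Alg}_{K-1}$ whose local action set covers $\phi$'s image on the relevant mass. Formalizing this decomposition---either via a potential-function argument that telescopes across levels, or via an explicit calibration / approximate fixed-point computation on the Markov chain induced by the $\vq^t_i$---and verifying that the per-level regret inflation is only an additive $\tO(\sqrt{(\log m)/T})$ rather than a multiplicative factor, is where the bulk of the technical work lies.
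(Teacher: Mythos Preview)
The statement is quoted from \citet{Dagan23:External,Peng23:Fast} and not re-proved in this paper, so the comparison is to the algorithms in those references. Your sketch departs from them in a way that matters.

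First, the recursion you write cannot close: starting from $\mathsf{SwapReg}(\mathsf{Alg}_0)\le 1$ and only \emph{adding} $O(\sqrt{(\log m)/T})$ at each level never drops below $1$, for any $K$ or $T$. Some contractive mechanism is required, and none is identified. Second, and more fundamentally, you retain the Blum--Mansour architecture of $m$ action-indexed inner copies and hope a clever rescaling of the fed-back utilities will remove the $\sqrt m$ that arises from summing $m$ square roots. The algorithms of Dagan et~al.\ and Peng--Rubinstein do \emph{not} do this; their point is precisely to abandon the per-action fixed-point structure. Their tree is indexed by \emph{time}, not actions: the $T$ rounds are partitioned hierarchically into blocks, each node of a depth-$d$ tree runs a single Hedge instance on the block-averaged losses it sees, and the learner plays the uniform average of the $d$ Hedge distributions along the current root-to-leaf path. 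The swap-regret bound then decomposes as an $O(1/d)$ ``staleness'' term (from averaging distributions that were committed to before some of the losses arrived) plus the Hedge external-regret terms at each level; choosing $d=\Theta(1/\eps)$ and block length $T^{1/d}=\poly(\log m,1/\eps)$ gives $T=(\log m)^{\tO(1/\eps)}$. There is no stationary-distribution computation, no decomposition of $\phi$ by target action, and no delicate utility scaling---the ``calibration / approximate fixed-point'' step you flag as the crux is exactly what these constructions sidestep.
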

Both papers also provided (nearly-)matching lower bounds. Here we state a particularly simple-to-state lower bound proven by \citet{Dagan23:External}.
\begin{theorem}[Theorem 4.1 of \citealp{Dagan23:External}, lower bound]\label{th:lb}
  Let $T < m/4$. Then, in the there exists an oblivious %
  adversary such that the swap regret of any learner for the $m$-simplex is $\Omega(\log^{-5}T)$.
\end{theorem}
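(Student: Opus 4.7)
My plan is to apply Yao's minimax principle: exhibit a distribution $\mu$ over oblivious utility sequences $(\vu^1, \dots, \vu^T)$ such that \emph{every} deterministic learner incurs expected swap regret $\Omega(\log^{-5} T)$ when the sequence is sampled from $\mu$. Derandomizing against the worst-case learner then yields one specific oblivious adversary that fools any algorithm.

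For the hard instance I would organize the $m > 4T$ actions as the leaves of a balanced tree $\cT$ of depth $d$ of order $\log T$ with branching factor $k$ of order $\log T$, chosen so $k^d \le m$; the constraint $T < m/4$ is exactly what gives enough room for this structure. The adversary samples a uniformly random root-to-leaf path $v_0, v_1, \dots, v_d$ in $\cT$ and splits the $T$ rounds into $d$ phases of length $T/d$. In phase $i$, the utility vector rewards only actions lying in the subtree rooted at $v_i$, scaled by a carefully chosen magnitude $\lambda_i$; the $\lambda_i$ are tuned (likely geometrically, and separating ``early'' from ``late'' phases) so that no learner can simultaneously perform well across all phases without essentially having decoded the full path.

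The analysis proceeds by induction on the depth. Let $p_i$ denote the expected mass that the learner places on the subtree of the true $v_i$ during phase $i$. A Fano- or KL-based argument on the posterior over $v_{i+1}$ given the transcript through phase $i$ shows that the learner cannot raise $p_{i+1}$ much beyond $1/k$ using only the phase-$i$ signal, because $v_{i+1}$ is uniform on the $k$ children of $v_i$. By contrast, the swap benchmark $\phi$ obtained by routing each action into a path-aware leaf --- crucially using different target leaves for different source actions, so the gap is genuinely swap rather than external --- gathers the full reward $\lambda_i$ in every phase. Summing the per-phase gaps and optimizing $d$, $k$, and the $\lambda_i$'s subject to the phase-lengths summing to $T$ and utilities lying in $[-1,1]$ yields the polylogarithmic lower bound, with the exponent $5$ arising from combining three sources of slack (the number-of-phases factor $1/d$, the per-phase concentration factor $1/k$, and the reward-scaling), each contributing a logarithm.

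The main obstacle is ruling out the ``lazy'' learner that keeps its mass nearly uniform and patches up external regret via per-phase simplex swap minimization. Foreclosing this requires the adversary's rewards across phases to be \emph{correlated along the path} so that only \emph{path-aware} transformations --- not just the best fixed action within each phase --- can capture the full reward, which is what turns the per-level information-theoretic barrier into genuine swap regret. Pushing the recursion through $d$ levels while handling arbitrary (not merely tree-aligned) remappings $\phi$ is the delicate technical core; I expect the proof to reduce to a recursive inequality on the expected per-level swap gap, and the exact exponent $5$ to emerge from a careful balancing of the parameters of the recursion rather than from any one clean estimate.
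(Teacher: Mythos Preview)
The paper does not actually prove \Cref{th:lb}; it is quoted from \citet{Dagan23:External} and used as a black box. The only information the present paper gives about the construction is the expanded restatement in \Cref{th:normalform}: the adversary partitions the action set into $d = O(\log T)$ groups $\cA_1, \dots, \cA_d$, plays unit vectors only, and within each group $\cA_i$ visits its actions $\va_{i1}, \va_{i2}, \dots$ in increasing index order. So there is no ``paper's own proof'' to compare against beyond this structural description.

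That said, your proposed construction has a genuine gap, and it is not the one you flag. In your instance there is a \emph{single} random root-to-leaf path, and the unique leaf $v_d$ on that path lies in every rewarded subtree, hence collects the full reward $\sum_i \lambda_i$. Thus the best \emph{fixed} action already matches any swap deviation, so your adversary lower-bounds external regret rather than the swap--external gap. But external regret over $m$ actions can be driven to $O(\sqrt{(\log m)/T})$, so no nontrivial lower bound survives. Your sentence ``routing each action into a path-aware leaf --- crucially using different target leaves for different source actions'' is exactly where the argument breaks: with one path there is one optimal leaf, and the swap map has no reason to send different sources to different targets. A secondary issue is the parameter budget: with $d, k = \Theta(\log T)$ you need $k^d = (\log T)^{\Theta(\log T)} \le m$, which is far stronger than the hypothesis $m > 4T$.

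The structure in \Cref{th:normalform} avoids this by a different mechanism. Because the adversary traverses each group $\cA_i$ monotonically, the learner's chosen action at time $t$ implicitly encodes coarse timing information (``which index within each group has already been passed''), and the swap benchmark can exploit that signal to remap the learner's action to the \emph{next} good action in the relevant group --- something no single fixed action can do across all $d$ groups simultaneously. That interaction between monotone traversal and action-as-timestamp is what produces a true swap lower bound; a single random path with shrinking rewarded subtrees does not create it.
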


\subsection{Extensive-form games and tree-form strategy sets} For more general extensive-form games, the picture is less clear. For an upper bound, one can consider a tree-form decision problem with $M$ pure strategies (\ie, $|\cX| = M$) as simply an ``easier version'' of a normal-form decision problem where each pure strategy is treated as a different action, \ie, where the strategy set is the $M$-simplex. \Cref{th:ub} therefore implies a similar bound on swap regret for tree-form decision problems.\footnote{As stated, the bound is only information-theoretic. However, the information-theoretic bound is implementable by an efficient (\ie, $\poly(m, 1/\eps)$-time-per-iteration) algorithm, which is described by \citet{Dagan23:External} and \citet{Peng23:Fast}, and is beyond the scope of this paper.}
\begin{corollary}[\citealp{Dagan23:External,Peng23:Fast}, tree-form upper bound]\label{th:efg-upper}
Let $\cX \subset \{0, 1\}^m$ be a tree-form strategy set. 
    There exists a no-regret learning algorithm for tree-form decision problems that achieves swap regret $\eps$ after $T = (\log |\cX|)^{\tO(1/\eps)} \le m^{\tO(1/\eps)}$ rounds.
\end{corollary}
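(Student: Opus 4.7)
The plan is to treat the tree-form strategy set $\cX$ as a finite set of $|\cX|$ atomic actions and invoke \Cref{th:ub} directly on the resulting simplex $\Delta(\cX)$. Enumerate the pure strategies as $\vx_1, \dots, \vx_{|\cX|} \in \{0,1\}^m$. Each tree-form utility $\vu^t \in \R^m$ induces an action-level utility vector $\widetilde\vu^t \in [-1,1]^{|\cX|}$ whose $k$th entry is $\ip{\vu^t, \vx_k}$; these entries lie in $[-1,1]$ by the standing assumption $u^t : \cX \to [-1,1]$.

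Next I feed the utilities $\widetilde\vu^t$ to the simplex algorithm of \Cref{th:ub} applied to $\Delta(\{\vx_1, \dots, \vx_{|\cX|}\}) = \Delta(\cX)$, and interpret its per-round output as the learner's mixed strategy $\pi^t$. The key observation is that the tree-form and simplex notions of swap regret coincide under this identification: any function $\phi : \cX \to \cX$ in the tree-form sense is literally the same object as a function from the $|\cX|$ atomic actions to themselves, and
\begin{align}
    \E_{\vx \sim \pi^t} \ip{\vu^t,\, \phi(\vx) - \vx}
\end{align}
is exactly the simplex swap-regret increment in round $t$ under $\widetilde\vu^t$. Averaging over $t$ shows that the tree-form swap regret is equal to the simplex swap regret, and hence can be driven below $\eps$ using the simplex guarantee.

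Applying \Cref{th:ub} with $|\cX|$ in place of $m$ yields swap regret $\eps$ after $T = (\log |\cX|)^{\tO(1/\eps)}$ rounds. Since $\cX \subset \{0,1\}^m$, the trivial bound $|\cX| \le 2^m$ gives $\log |\cX| \le m$, so $T \le m^{\tO(1/\eps)}$, as claimed. I expect no real mathematical obstacle in the reduction itself; the only subtlety, explicitly flagged in the paper's footnote, is computational: the naive version of this reduction runs in time proportional to $|\cX|$ per iteration, which can be exponential in $m$, so obtaining a $\mathrm{poly}(m,1/\eps)$-time-per-iteration implementation is a separate efficiency question handled in \cite{Dagan23:External,Peng23:Fast} and out of scope for this corollary.
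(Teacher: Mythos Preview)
Your proposal is correct and follows essentially the same approach as the paper: view $\cX$ as the vertex set of an $|\cX|$-simplex, apply \Cref{th:ub} with $|\cX|$ playing the role of the number of actions, and use $|\cX|\le 2^m$ to pass to the $m^{\tO(1/\eps)}$ bound. Your handling of the computational caveat also matches the paper's footnote.
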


Showing a matching lower bound for extensive form, however, remained open. The main difficulty is that the adversary is restricted to {\em linear} utility functions $u^t : \cX  \to \R$; the adversary in \Cref{th:lb} does not use linear utility functions when the extensive-form game is interpreted as a normal-form game over $M$ actions as described above. The purpose of this paper is to close this discrepancy, by showing a lower bound that almost matches \Cref{th:efg-upper}. 

\section{Main Result}

Our main result is the following.

\begin{theorem}[Main theorem]\label{th:efg-lower}
    There exist arbitrarily large tree-from strategy sets $\cX \subset \{0, 1\}^m$ with the following property. Let $\eps > 0$ and suppose $T \le \exp(\Omega\qty(\min\qty{m^{1/14}, \eps^{-1/6}}))$. Then there exists an oblivious adversary running for $T$ iterations against which no learner can achieve expected swap regret better than $\eps$.
\end{theorem}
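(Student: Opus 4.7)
The plan is to reduce from the normal-form swap regret lower bound of \Cref{th:lb}. A naive embedding of the hard $N$-simplex instance as a tree-form decision problem (one decision point with $N$ actions) requires $m \ge N$ terminals, yielding only a polynomial lower bound in $m$. The goal is therefore to construct a tree-form strategy set $\cX \subset \{0,1\}^m$ with exponentially many pure strategies (roughly $|\cX| \ge \exp(\Omega(m^{1/14}))$) in a way that is compatible with the linearity constraint on the adversary's utility vectors.

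First, I would build a tree-form problem containing a distinguished family $\cF = \{\vx^{(i)}\}_{i \in [N]} \subset \cX$ of pure strategies with $N = \exp(\Omega(m^{1/14}))$. The family should satisfy two key properties: (a) \emph{expressiveness}---the realization vectors $\vx^{(i)}$ are geometrically well-spread in $\R^m$, so that for any target function $w : [N] \to [-1,1]$ (in a suitable class) there is a linear functional $\vu \in [-1,1]^m$ with $\ip{\vu, \vx^{(i)}} \approx w(i)$; and (b) \emph{punishability}---the same $\vu$ can be made to assign low utility to every $\vx \in \cX \setminus \cF$, so the adversary can discourage the learner from placing mass outside $\cF$. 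A natural candidate is a tree with many observation nodes that fan out the player's strategy across parallel sub-decisions, with $\cF$ chosen via a probabilistic or coding-theoretic construction to guarantee near-orthogonality of the $\vx^{(i)}$.

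Second, I would instantiate the adversarial utility sequence of \Cref{th:lb} on the $N$-simplex and, using property (a), translate each utility vector into a linear utility $\vu^t \in [-1,1]^m$ for the tree-form problem. This defines the oblivious adversary. Third, given any learner for the tree-form game with expected swap regret at most $\eps$, I would project each of its mixed strategies $\pi^t \in \Delta(\cX)$ onto $\Delta(\cF) \cong \Delta^{N-1}$ (for instance, by classifying each $\vx \in \cX$ to its nearest member of $\cF$ and using property (b) to absorb the error). I would then argue that the projected learner has swap regret on the $N$-simplex of the same order, contradicting \Cref{th:lb}. A subtlety here is that a swap function $\phi : \cX \to \cX$ on the tree-form side might move mass outside $\cF$, and one must check that such deviations cannot be better than the best swap $\phi' : \cF \to \cF$ up to small error.

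The main obstacle is designing $\cF$ to satisfy expressiveness (a) and punishability (b) simultaneously with $N$ exponentially large in $m$. Expressiveness pushes toward spreading the $\vx^{(i)}$ widely in $\R^m$, while punishability requires linear functionals to separate $\cF$ from the rest of $\cX$; the rigid tree-form structure (subsets of $\{0,1\}^m$ arising from hierarchical action choices) constrains both, and achieving the combination inside a tree of only $m$ terminals is the crux. I expect the quantitative trade-offs here, together with the approximation error in realizing arbitrary simplex utilities by linear functionals, to account for the weakening of the exponent on $\eps$ from the $\exp(\Omega(\eps^{-1/5}))$ of \Cref{th:lb} to the $\exp(\Omega(\eps^{-1/6}))$ of the theorem, and for the specific exponent $1/14$ on $m$.
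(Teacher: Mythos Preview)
Your high-level strategy---embed the hard $N$-simplex instance into a tree-form set using near-orthogonal random vectors, then pull any good tree-form learner back to a good simplex learner---is correct and is what the paper does. The concrete tree has the learner pick $i \in [d]$ at the root (with $d = O(\log T)$), observe $j \in [n]$, then pick a bit; strategies in $\cX_i$ are identified with vectors in $\{\pm 1/\sqrt n\}^n$, and the adversary maps each normal-form action to a uniformly random such vector. But your mechanism (b), ``punishability'', is where the plan breaks. You ask each utility $\vu$ to assign low value to \emph{every} $\vx \in \cX \setminus \cF$. Since $|\cX| = 2^{\Omega(m)}$ while $\vu \in \R^m$, this cannot hold uniformly: the bound $|\ip{\vx, \vu}| \le \eps$ for random $\vu$ holds for each fixed $\vx$ but does not survive a union bound over all learner strategies, and ``project to the nearest member of $\cF$'' gives no control over a learner that plays an $\vx$ highly correlated with some \emph{future} $\tilde\vu^{t'}$.

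The paper circumvents punishability by extracting extra structure from the normal-form adversary, stated as a refinement of \Cref{th:lb}: the adversary plays only unit vectors, its action set partitions into $d = O(\log T)$ blocks $\cA_1, \dots, \cA_d$, and within each block it plays actions in a fixed increasing index order. These blocks are what the root decision $i \in [d]$ corresponds to. The back-projection $p_\vx$ sends $\vx \in \cX_i$ to (a mixture of a dummy action $\va^*$ and) $\va_{ij}$ for the \emph{largest} $j$ with $\ip{\vx, \tilde\va_{ij}} > \eps$. The increasing-order property guarantees that after $\tilde\va_{ij}$ first appears, the only block-$i$ action the adversary can still play that correlates with $\vx$ is $\tilde\va_{ij}$ itself, so $\ip{\vx, \tilde\vu^t} \le p_\vx(\vu^t) + \eps$ for all such $t$. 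For earlier $t$, the learner has not yet seen the random vector $\tilde\va_{ij}$; a direct \emph{expectation} bound (not a union bound over $\cX$) then shows the learner's cumulative mass on such $\vx$ is negligible. This block-and-order structure, together with the unpredictability of unseen random actions, is the substitute for your property (b); without identifying and using it, the reduction does not close.
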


Intuitively, the proof of \Cref{th:efg-lower} works by finding an ``embedding'' of the adversary of \Cref{th:lb} into a tree-form decision problem such that the utility functions $u^t$ do remain linear. This works by choosing random vectors in $\{-1, 1\}^n$ (for some appropriately chosen dimension $n$) to simulate the ``actions'' in the (exponentially large) normal-form decision problem, and then exploiting the concentration property that an exponentially large number of such vectors $\{\va_i\}_{i=1}^{M}$ can be chosen such that $\ip{\va_i, \va_j} \approx 0$ for all $i \ne j$.

Like \Cref{th:lb}, our lower bound is {\em information-theoretic}: it does not rely on computational hardness results, and thus applies to {\em any} no-regret learning algorithm no matter how much computation it might perform.

Before proving \Cref{th:efg-lower}, we first state a more detailed version of the normal-form lower bound (\Cref{th:lb}).\footnote{The discussion in \citealt{Dagan23:External} on pages 37--38 of  specifies the adversary which satisfies the properties listed in \Cref{th:normalform}.} This restatement changes the notation so as to avoid mixing the notation between tree-form and normal-form decision problems, and extracts some useful properties of the adversary. In particular, a key property of the adversary that we exploit (specified in \Cref{it:adv-unit} in the below theorem) is that, with probability 1, each of the vectors $\vu^t$ that it chooses is in fact a unit vector $\ve_i$. (In general, in the normal-form game setting, the vectors $\vu^t$ may have coordinates in $[-1,1]$.)

\begin{theorem}[\citealt{Dagan23:External}, expanded version of \Cref{th:lb}]\label{th:normalform}
    Let $\cA = \{ \ve_1, \ldots, \ve_M \}$ be the $M$-simplex, and let $T < M/4$. Then there exists an adversary on $\cA$ with the following properties:
    \begin{enumerate}
        \item  \label{it:adv-unit} The adversary selects a sequence $(\vu^1, \dots, \vu^T) \sim \cD$ from some distribution $\cD \in \Delta(\cA^T)$, and then outputs utility vector $\vu^t$ at time $t$ regardless of the sequence of distributions played by the learner.
        \item There exists an action $\va^* \in \cA$ that is never used by the adversary.\footnote{This can always be assumed WLOG.} \noah{maybe note saying this can be added wlog? I don't think our adversary technically has it}
        \item There exists a partition $\cA = \cA_1 \sqcup \dots \sqcup \cA_d$ where $d \le \cO(\log T)$ with the following property. 
        Within each set $\cA_i$, number the actions $\cA_i = \{ \va_{i1}, \dots, \va_{iM_i} \}$. For any sequence $(\vu^1, \dots, \vu^T)\in \supp \cD$, the adversary plays actions in $\cA_i$ only in increasing order. That is, if $\vu^t = \va_{ij}$ and $\vu^{t'} = \va_{ij'}$ and $t \le t'$, then $j \le j'$. 
        \item The swap regret of any learner against this adversary is\footnote{The reason that the $-5$ in \Cref{th:lb} has been changed to a $-6$ here is that the adversary is now constrained to pick a sequence of {\em actions}, \ie, $\ell_1$-bounded losses, instead of $\ell_\infty$-bounded losses. See \cite{Dagan23:External}, Theorems 1.7 and 4.1} $\Omega(d^{-6}) = \Omega(\log^{-6} T)$. %
        \label{prop:increasing}
    \end{enumerate}
\end{theorem}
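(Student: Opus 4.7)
The plan is to not reprove the regret bound from scratch but to invoke the adversary constructed by \citet{Dagan23:External} (described on pages 37--38 of their paper and underlying their Theorem~4.1) and check that the four listed properties follow from its definition. The Dagan--Daskalakis--Fishelson--Golowich adversary is hierarchical: one fixes $d = \Theta(\log T)$ ``levels'', at each level $i$ there is a family of candidate actions $\cA_i\subset\cA$, and the random schedule proceeds by committing to one action $\va_{ij}\in\cA_i$ for a contiguous random time window and then advancing to $\va_{i,j+1}$, with the different levels being interleaved according to the recursive structure. The partition required by the statement is taken to be exactly this partition of actions by the level at which they are introduced.

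Properties~\ref{it:adv-unit} and the ``unused action'' clause are essentially immediate. For obliviousness, the entire sequence $(\vu^1,\dots,\vu^T)\sim\cD$ is drawn from a fixed distribution before the game begins, and each realized utility vector is a standard basis vector $\ve_i$ by construction, giving the $\vu^t\in\cA$ assertion. If $\cA$ happens to be fully used by the construction, we simply adjoin one extra decoy coordinate $\va^*$ to the simplex; this increases $M$ by $1$, leaves $\cD$ and all regret quantities unchanged, and supplies the required unused action.

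The heart of the verification is property~\ref{prop:increasing}: within each $\cA_i$, the played indices are non-decreasing in time. This is a structural fact about the adversary rather than a new bound, so the plan is to open the recursive definition and verify by induction on $i$ that the level-$i$ subroutine plays $\va_{i1},\va_{i2},\dots,\va_{iM_i}$ in order, each on a contiguous (possibly empty) time interval, regardless of what the deeper recursive calls do during the ``idle'' portions of that interval. Since levels deeper than $i$ only subdivide the time slots assigned to a fixed $\va_{ij}$ and never reorder them, the monotonicity of the level-$i$ sequence is preserved under interleaving. The bound $d=\cO(\log T)$ on the number of levels is read directly off the recursion depth in \citet{Dagan23:External}. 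I expect this ``monotonicity survives interleaving'' step to be the main obstacle, since it requires carefully tracking which outer-level action is active during each inner recursive call; every other property is essentially by inspection.

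Finally, for property~4 the plan is to cite Theorem~4.1 (combined with the amplification in Theorem~1.7) of \citet{Dagan23:External} as a black box. The only discrepancy with \Cref{th:lb} as stated above is the exponent $-6$ in place of $-5$. This is accounted for by the normalization change flagged in the footnote: \Cref{th:lb} is stated for adversaries with $\ell_\infty$-bounded utility vectors in $[-1,1]^M$, whereas here the adversary is restricted to pick an $\ell_1$-unit basis vector $\ve_i$, and converting between these two regimes in the Dagan et al.\ argument costs one additional factor of $d$ inside the regret, hence one additional power of $d^{-1}$ in the final bound. So the final lower bound is $\Omega(d^{-6})=\Omega(\log^{-6}T)$ as claimed, and the restatement is complete.
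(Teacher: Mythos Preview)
Your proposal is correct and matches the paper's treatment: the paper does not reprove \Cref{th:normalform} either, but simply cites the construction on pages 37--38 of \citet{Dagan23:External} (see the footnote attached to the theorem) and asserts that the listed properties hold. Your write-up is a faithful, slightly more detailed unpacking of that same citation---in particular your explanation of why the level-wise recursion gives the monotonicity in property~\ref{prop:increasing}, and why the $\ell_1$-versus-$\ell_\infty$ normalization costs one extra factor of $d$, are exactly the points the paper's footnotes flag.
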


We now prove \Cref{th:efg-lower} via a reduction from \Cref{th:normalform}. In particular, we fix $M \in \mathbb{N}$ as in \Cref{th:normalform}, and let $\cA = \{ \ve_1, \ldots, \ve_M \}$ denote the action set for the lower bound in \Cref{th:normalform}. We will often use the decomposition $\cA = \cA_1 \sqcup \cdots \sqcup \cA_d$. 

\paragraph{Extensive-form game instances.} Consider the following family of tree-from strategy sets, parameterized by natural numbers $d$ and $n$. First the learner picks an index $i \in [d]$. Then the environment picks $j \in [n]$, and finally the learner picks a binary action. This family of decision problems is depicted in \Cref{fig:dp}.

\begin{figure}[h]
\definecolor{p1color}{RGB}{31,119,180}
\newcommand{\pone}{{\ensuremath{\color{p1color}\blacktriangle}}\xspace}
\tikzset{
  every path/.style={-},
}
\forestset{
        default preamble={for tree={
        parent anchor=south, child anchor=north
}},
  p1/.style={
      regular polygon,
      regular polygon sides=3,
      inner sep=2pt, fill=p1color, draw=p1color,
  },
  nat/.style={draw},
  terminal/.style={draw=none, inner sep=2pt},
   el/.style n args={1}{edge label={node[midway, fill=white, inner sep=1pt, draw=none, font=\footnotesize] {#1}}},
}
\centering
\scalebox{0.9}{
\begin{forest}
for tree={l+=1em},
    [,p1
        [,nat,el={$i{=}1$},s sep=3em
            [,p1,el={$j{=}1$} [] []]
            [,p1,el={$j{=}2$} [] []]
            [$\dots$, draw=none, no edge]
            [,p1,el={$j{=}n$} [] []]
        ]
        [,nat,el={$i{=}2$},s sep=3em
            [,p1,el={$j{=}1$} [] []]
            [,p1,el={$j{=}2$} [] []]
            [$\dots$, draw=none, no edge]
            [,p1,el={$j{=}n$} [] []]
        ]
        [$\dots$, draw=none, no edge]
        [,nat,el={$i{=}d$},s sep=3em
            [,p1,el={$j{=}1$} [] []]
            [,p1,el={$j{=}2$} [] []]
            [$\dots$, draw=none, no edge]
            [,p1,el={$j{=}n$} [] []]
        ]
    ]
\end{forest}
}
    \caption{A depiction of the class of tree-form decision problems used in the proof of \Cref{th:efg-lower}. Triangles (\pone) are decision points and boxes ($\square$) are observation points. }\label{fig:dp}
\end{figure}

A pure strategy is identified (up to linear transformations) by a vector $\vx \in \R^{d\times n}$ where, for some $i \in [d]$, $\vx[i, \cdot] \in \{-\frac{1}{\sqrt{n}}, \frac{1}{\sqrt{n}} \}^n$ and $\vx[i', \cdot]=\vec 0$ if $i' \ne i$ (\ie, $\vx$ interpreted as a matrix with exactly one nonzero row). For convenience, we will use $\cX_i \subset \cX$ to denote the set of (pure) strategies where the learner plays $i$ at the root. Let $C$ be an absolute constant large enough to make the asymptotic bounds in \Cref{th:normalform} true.

\paragraph{The adversarial environment.}
The adversary used to prove \Cref{th:efg-lower} works as follows. First, for each $i \in [d]$, it populates $\cA_i$ with $M_i$ uniformly randomly chosen strategies in $\cX_i$. %
Formally, we let $\psi : \cA \to \cX$ denote the (random) mapping which associates each action in $\cA_i \subset \cA$ with the corresponding action in $\cX_i$: the image of $\cA_i$ under $\psi$ consists of actions we denote by $\tilde\va_{i1}, \ldots, \tilde\va_{iM_i} \in \cX_i$, which are chosen independently and uniformly in $\cX_i$. The adversary in \Cref{th:normalform} produces a random sequence $\vu^1, \ldots, \vu^T \in \cA$; we consider the adversary which draws a sequence $\vu^t$ from that distribution and outputs the sequence consisting of $\tilde \vu^t := \psi(\vu^t)$ for $t \in [T]$. %
\noah{I changed notation a bit here since $\cA$ and $\cX$ were being conflated}

\paragraph{Analysis.}
Let $\eps > 0$ be a parameter to be selected later. We start with a simple concetration bound.
\begin{lemma}\label{lem:concentration}
    Let $\delta = M^2 e^{-n\eps^2/2}$. With probability $1 - \delta$, for all $\va, \va ' \in \cA$, we have $\abs{\ip{\tilde{\vec a}, \tilde{\vec a}'} - \ind{\va = \va'}} \le \eps$.
\end{lemma}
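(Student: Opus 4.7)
The plan is to split on which ``super-action'' sets $\cA_i$ the two elements $\va, \va'$ belong to, and handle each case separately. Let $\va \in \cA_i$ and $\va' \in \cA_{i'}$.

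First I would dispose of the easy (deterministic) cases. If $\va = \va'$, then $\tilde \va = \tilde \va'$ is a $\pm 1/\sqrt{n}$ vector supported on the single row $i$ of length $n$, so $\langle \tilde \va, \tilde \va'\rangle = n \cdot 1/n = 1 = \ind{\va = \va'}$, contributing $0$ to the deviation. If $i \ne i'$, the two images $\tilde \va \in \cX_i$ and $\tilde \va' \in \cX_{i'}$ are supported on disjoint rows, so their inner product is $0 = \ind{\va = \va'}$ identically. In both cases, no randomness is needed and the bound holds with probability $1$.

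The only interesting case is $\va \ne \va'$ with both in the same $\cA_i$. Then $\tilde \va, \tilde \va' \in \cX_i$ have their nonzero entries in the same row $i$, consisting of independent uniform $\pm 1/\sqrt{n}$ signs. I would write
\begin{align}
    \langle \tilde \va, \tilde \va'\rangle = \sum_{k=1}^n \tilde \va[i,k]\, \tilde \va'[i,k] = \frac{1}{n}\sum_{k=1}^n \xi_k,
\end{align}
where $\xi_k := n\, \tilde \va[i,k]\, \tilde \va'[i,k] \in \{-1, +1\}$ are independent Rademacher random variables (because the entries of $\tilde \va'$ are independent of $\tilde \va$ and each has a symmetric $\pm 1/\sqrt{n}$ distribution). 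Applying Hoeffding's inequality to $\sum_{k=1}^n \xi_k$ with threshold $n\eps$ gives
\begin{align}
    \Pr\qty[\abs{\langle \tilde \va, \tilde \va'\rangle} > \eps] \le 2 e^{-n\eps^2/2}.
\end{align}

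Finally I would take a union bound over all unordered pairs $\{\va, \va'\}$ with $\va \ne \va'$ lying in the same $\cA_i$; there are at most $\binom{M}{2} \le M^2/2$ such pairs, and each contributes at most $2 e^{-n\eps^2/2}$ to the failure probability. Thus the total failure probability is at most $M^2 e^{-n\eps^2/2} = \delta$, as claimed. There is no serious obstacle here: the heart of the argument is a one-line Hoeffding bound, with the block structure of $\cX = \bigsqcup_i \cX_i$ handling the cross-block terms for free.
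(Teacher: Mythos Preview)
Your proof is correct and follows essentially the same approach as the paper: handle $\va = \va'$ trivially, apply Hoeffding's inequality to the remaining pairs, and union bound over the at most $\binom{M}{2}$ pairs. Your explicit treatment of the cross-block case $i \ne i'$ (where the inner product is identically zero) is a nice clarification that the paper leaves implicit, but the argument is otherwise the same.
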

\begin{proof}
    If $\va = \va '$ then the claim holds trivially because then $\tilde{\va} = \tilde{\va}'$, and they are both unit vectors. For a fixed $\vec a \ne \vec a' \in \cA$, the claim holds with probability $2e^{-n\eps^2/2}$ by Hoeffding's inequality. The lemma then follows by union bounding over the $\binom{M}{2} \le M^2/2$ pairs.
\end{proof}

We will claim that, for any learner against this adversary, there exists a learner against the adversary of \Cref{th:normalform} that achieves a similar swap regret---and thus the swap regret of the former learner must be large. First, we will construct the latter learner.

Let $\pi^1, \dots, \pi^T \in \Delta(\cX)$ be the sequence of distributions played by the learner. Note that $\pi^t$ can depend on the utilities $\vu^{1:t-1} \in \cA$ that are played by the adversary. Consider the sequence $\bar\pi^1, \dots, \bar\pi^T \in \Delta(\cA)$, where $\bar\pi^t$ is the distribution that samples $\vx \sim \pi^t$ and plays according to $p_\vx \in \Delta(\cA)$, defined as follows. Let $\vx \in \cX_i$ be any strategy. There are two cases.
\begin{enumerate}
    \item 
    $\ip{\vx,\tilde \va_{ij}} \le \eps$ for every $i \in [d], j \in [M_i]$. Then define $p_\vx=\va^*$ deterministically (i.e., $p_\vx$ is the distribution which puts all of its mass on $\va^*$). 
    \item $\ip{\vx, \tilde\va_{ij}} > \eps$ for some $i \in [d], j \in [M_i]$. Let $j$ be the {\em largest} such index, let $\beta = \ip{\vx, \tilde\va_{ij}}$, and define $p_\vx$ as the distribution that is $\va^*$ with probability $1-\beta$ and $\va_{ij}$ with probability $\beta$. Note that $\beta \in [0,1]$ since $\vx, \tilde\va_{ij} \in \{ -\frac{1}{\sqrt n}, \frac{1}{\sqrt n} \}^n$ and in this case we have assumed that $\langle \vx, \tilde\va_{ij} \rangle > \eps > 0$. 
\end{enumerate}
A critical property for us will be that the learner cannot ``guess in advance'' what future unobserved $\tilde\va_{ij}$s will be, since these are sampled uniformly at random. That is, in Case 2, $\vx$ can only be played with large probability once the adversary has played $\tilde\va_{ij}$. 

To be more formal, we first define some notation. For every $i \in [d], j \in [M_i]$, let $t_{ij}$ be the first iteration on which the  adversary plays $\tilde\va_{ij}$ (or $t_{ij} = T$ if this never happens). For $\vx \in \cX_i$, if $\vx$ is in Case 1 above then define $t_\vx = 0$, and otherwise define $t_\vx = t_{ij}$, where $j$ is as in Case 2.

There are two properties that we will critically need to use about $t_\vx$. The first states that the learner cannot place large mass on $\vx$ until after $t_\vx$, because doing so would require the learner to guess a vector heavily correlated with $\tilde\va_{ij}$ before the learner observes $\tilde\va_{ij}$.

\begin{lemma}
  \label{lem:delta-bound} $\ds \E\frac{1}{T}\sum_{\vx \in \cX} \sum_{t =1}^{t_\vx} \pi^t(\vx) \le \delta$.
\end{lemma}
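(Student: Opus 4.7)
The plan is to bound the inner-sum contribution at each time step $t$ separately, by conditioning on the learner's observation history and using that, for any fixed $\vx$, the event $t_\vx \ge t$ forces some specific $\tilde\va_{ij}$ to be among the yet-unobserved (and hence conditionally uniformly random) actions. For $\vx\in\cX_i$, let $J(\vx) \in \{0,1,\dots,M_i\}$ denote the largest index $j$ such that $\langle \vx, \tilde\va_{ij}\rangle > \eps$ (or $0$ if no such $j$ exists), so that in Case~2 we have $t_\vx = t_{i, J(\vx)}$. When $\tilde\va_{i, J(\vx)}$ is unobserved, its conditional distribution is uniform in $\{-1/\sqrt{n},+1/\sqrt{n}\}^n$, and the chance that any fixed $\vx$ of the same form is $\eps$-correlated with it is exponentially small in $n$ by Hoeffding---this is exactly where the factor $e^{-n\eps^2/2}$ comes from.

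Concretely, I will fix $t \in [T]$ and condition on the $\sigma$-algebra $\cG_t := \sigma(\vu^{1:t-1}, \tilde\vu^{1:t-1})$; the learner's $\pi^t$ is $\cG_t$-measurable. Because the normal-form adversary in \Cref{th:normalform} is oblivious, $\vu^{1:T}$ is independent of the random mapping $\psi$, so conditional on $\cG_t$ the unobserved values $\{\tilde\va_{ij}\}$ remain iid uniform in their respective $\cX_i$. Let $O_i \subseteq [M_i]$ denote the set of indices $j$ whose $\tilde\va_{ij}$ has been played by time $t-1$, and $U_i = [M_i]\setminus O_i$. I claim $\{t_\vx \ge t\} = \{J(\vx) \in U_i\}$: if $J(\vx)\in O_i$ then $\tilde\va_{i,J(\vx)}$ was observed before time $t$, giving $t_\vx < t$; if $J(\vx)\in U_i$ then the first time $\tilde\va_{i,J(\vx)}$ ever appears is $\ge t$, so $t_\vx \ge t$ (and Case~1 is impossible since $0 \notin U_i$). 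A union bound together with Hoeffding's inequality (applied to $\langle \vx,\tilde\va_{ij}\rangle = \frac{1}{n}\sum_{k=1}^n \tau_k$ with iid Rademacher $\tau_k$) then yields
\[
    \Pr[t_\vx \ge t \mid \cG_t] \;\le\; |U_i|\, e^{-n\eps^2/2} \;\le\; M_i\, e^{-n\eps^2/2}.
\]
Weighting by $\pi^t(\vx)$, summing over $\vx$ grouped by which $\cX_i$ they belong to, and using $\sum_i M_i\, \pi^t(\cX_i) \le \sum_i M_i = M$ gives $\E\sum_\vx \pi^t(\vx)\, \ind{t_\vx \ge t} \le M\, e^{-n\eps^2/2}$. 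Averaging over $t \in [T]$ then yields the bound $M e^{-n\eps^2/2} \le M^2 e^{-n\eps^2/2} = \delta$, as claimed.

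The main subtlety I expect is in the reduction above---showing cleanly that $\{t_\vx \ge t\} = \{J(\vx) \in U_i\}$ so that the event depends only on the conditionally uniform, unobserved $\tilde\va_{ij}$'s. A priori $t_\vx$ depends on the entire future sequence $\vu^{t:T}$, but the increasing-order property of the $\cA_i$'s in \Cref{th:normalform} guarantees that any unobserved $\tilde\va_{ij}$ stays unobserved until some time $\ge t$, and, crucially, because $J(\vx)$ is by definition the \emph{maximum}, only this single index has to lie in $U_i$ for $t_\vx \ge t$ to hold. Once this reduction is in place, the Hoeffding bound, the union bound over $|U_i| \le M_i$ indices, and the final summation are all routine.
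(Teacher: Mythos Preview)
Your proof is correct and follows essentially the same approach as the paper's. Both arguments hinge on the same key observation: if $t \le t_\vx$ then there is some index $j$ (namely $J(\vx)$) with $t_{ij} \ge t$ and $\langle \vx, \tilde\va_{ij}\rangle > \eps$, so that the relevant $\tilde\va_{ij}$ is still unobserved and conditionally uniform; Hoeffding plus a union bound over the at most $M_i$ candidate $j$'s then yields the $e^{-n\eps^2/2}$ factor, and summing/averaging gives $\le M e^{-n\eps^2/2} \le \delta$. The only cosmetic difference is organizational: you fix $t$, condition on $\cG_t$, and bound $\Pr[t_\vx \ge t \mid \cG_t]$, whereas the paper swaps the order of summation to group terms by $(i,j)$ and then factors $\E[\sum_{t\le t_{ij}} \pi^t(\vx)\,\ind{\langle\vx,\tilde\va_{ij}\rangle\ge\eps}]$ using independence of $\tilde\va_{ij}$ from $(\vu^{1:T},\pi^{1:t_{ij}})$.
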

\begin{proof}
    Since the learner has not yet observed $\tilde\va_{ij}$ at time $t_{ij}$, its prior strategy sequence $\pi^{1:t_{ij}}(\vx)$ must be independent of $\tilde\va_{ij}$. Moreover, if $t \le t_\vx$ then there must exist some $j$ with $t_{ij} \ge t$ and $\ip{\vx, \tilde\va_{ij}} \ge \eps$---namely, the $j$ defining Case 2. Thus we have:
    \begin{align}
        \E \frac{1}{T} \sum_{\vx \in \cX} \sum_{t=1}^{t_\vx}  \pi^t(\vx)
        &\le  \E \frac{1}{T} \sum_{i=1}^d \sum_{\vx \in \cX_i} \sum_{t=1}^T   \pi^t(\vx) \sum_{j : t_{ij} \ge t} \ind{\ip{\vx, \tilde\va_{ij}} \ge \eps}
        \\&=  \underbrace{\frac{1}{T} \sum_{i=1}^d \sum_{\vx \in \cX_i} \sum_{j=1}^{M_i} \E\qty[  \sum_{t \le t_{ij}} \pi^t(\vx) ]}_{\le M} \underbrace{ \E \qty[ \ind{\ip{\vx,\tilde \va_{ij}} \ge \eps} ] }_{\le e^{-n\eps^2/2}} \le \delta,
    \end{align}
    where in the last line we use the fact that $\tilde\va_{ij}$ is independent of $\pi^{1:t_{ij}}(\vx)$ and then Hoeffding's inequality. Moreover, we have used in the final inequality that for each $i \in [d]$ and $j \leq M_i \leq M$, we have $\frac 1T \sum_{\vx \in \cX}\sum_{t=1}^T \E[\pi^t(\vx)] \leq 1$. 
\end{proof}
The second property is that, for  $t > t_\vx$, utilities of $\vx$ under $\tilde\vu^t$ are approximately the same as those of $p_\vx$ under the utilies $\vu^t$ of \Cref{th:normalform}.
\begin{lemma}
  \label{lem:epsilon-bound}
    For $t > t_\vx$, we have $\ds \ip{\vx, \tilde\vu^t} \le p_\vx(\vu^t) + \eps$.
\end{lemma}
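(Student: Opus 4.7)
The plan is to do a case analysis on $\vx$ (Case 1 vs.\ Case 2 in the definition of $p_\vx$) and, within Case 2, a sub-case analysis on what action $\vu^t \in \cA$ the normal-form adversary plays at time $t$. The unifying observation is that $\vu^t = \ve_k$ for some $k$ and $\va^*$ is never played, so $p_\vx(\vu^t) = \E_{\va \sim p_\vx}\ip{\va, \vu^t}$ equals the mass $p_\vx$ places on $\vu^t$ (and in particular is $0$ whenever $\vu^t \ne \va_{ij}$, with $j$ as in Case 2).

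First I would handle Case 1: by definition $\ip{\vx, \tilde\va_{i'j'}} \le \eps$ for all $i',j'$, so in particular $\ip{\vx, \tilde\vu^t} \le \eps = 0 + \eps = p_\vx(\vu^t) + \eps$, and we are done regardless of $t$. For Case 2, let $j$ be the largest index with $\ip{\vx, \tilde\va_{ij}} > \eps$ (so $t_\vx = t_{ij}$) and write $\vu^t = \va_{i'j'}$. I split into three sub-cases: (a) $(i',j') = (i,j)$, (b) $i' \ne i$, and (c) $i' = i$ but $j' \ne j$.

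In sub-case (a), $\tilde\vu^t = \tilde\va_{ij}$, so $\ip{\vx, \tilde\vu^t} = \beta = p_\vx(\vu^t)$ exactly. In sub-case (b), the strategies $\vx \in \cX_i$ and $\tilde\va_{i'j'} \in \cX_{i'}$ have disjoint support (nonzero rows $i$ and $i'$ respectively), so $\ip{\vx, \tilde\vu^t} = 0 = p_\vx(\vu^t)$. Sub-case (c) is where the hypothesis $t > t_\vx$ is used and is the only place the argument is non-trivial: the ``increasing order'' property of the adversary from \Cref{prop:increasing} in \Cref{th:normalform} implies that after $t_{ij}$ the only actions from $\cA_i$ that can appear are $\va_{ij''}$ with $j'' \ge j$, so $j' > j$. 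But $j$ is by definition the \emph{largest} index with $\ip{\vx, \tilde\va_{ij}} > \eps$, so $\ip{\vx, \tilde\va_{ij'}} \le \eps$, and again $p_\vx(\vu^t) = 0$ while $\ip{\vx, \tilde\vu^t} \le \eps$.

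The main (and only) subtle point is sub-case (c): it is where the ``largest $j$'' choice in the definition of $p_\vx$ interacts with the increasing-order structure guaranteed by \Cref{th:normalform}, and it is precisely the reason the lemma is stated for $t > t_\vx$ rather than all $t$. Everything else is immediate from the definitions of $p_\vx$, $\tilde\va_{ij}$, and the support structure of $\cX_i$.
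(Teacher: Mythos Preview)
Your proof is correct and follows essentially the same approach as the paper's: a case split on the definition of $p_\vx$, and in Case~2 an appeal to the increasing-order property (Property~\ref{prop:increasing}) to rule out $j' < j$, combined with the maximality of $j$ to bound $\ip{\vx,\tilde\va_{ij'}}$ for $j' > j$. The only difference is cosmetic: you separate out the $i' \neq i$ sub-case explicitly (where the inner product vanishes by disjoint support), whereas the paper folds this into its ``latter case'' $\vu^t \notin \{\va_{i1},\ldots,\va_{ij}\}$ without comment.
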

\begin{proof}
Let $\vx \in \cX_i$. There are two cases. In the first case, we suppose that $\ip{\vx, \tilde\va_{ij}} \le \eps$ for every $i \in [d], j \in [M_i]$. Then for every $t$, we have $\vu^t \notin \supp p_\vx = \{ \va^* \}$ (because the adversary of \Cref{th:normalform} never plays $\va^*$), and $\ip{\vx, \tilde\vu^t} \le \eps$ by definition, so we are done.

Otherwise, let $j$ be the largest index for which $\ip{\vx, \tilde\va_{ij}} > \eps$. Then $t_\vx = t_{ij}$ by definition, and since $t > t_{ij}$, by Property~\ref{prop:increasing}, for time steps following $t_{ij}$ the adversary of \Cref{th:normalform} is no longer allowed to play $\va_{ij'}$ for $j' < j$. Thus, either $\vu^t = \va_{ij}$, or else $\vu^t \not \in \{ \va_{i1}, \ldots, \va_{ij} \}$. Since $j$ is defined to be the largest index for which $\langle \vx, \tilde\va_{ij} \rangle > \ep$, in the latter case we must have $p_\vx$ puts all its mass on $\va^* \neq \vu^t$, meaning that $\langle \vx, \tilde\vu^t \rangle \leq \eps = p_\vx(\vu^t) + \eps$. In the former case, we have $\langle \vx, \tilde\vu^t \rangle = \beta = p_\vx(\va_{ij})= p_\vx(\vu^t)$. 
\end{proof}

For the rest of this proof we will use $\bar V(\phi)$ to denote the utilities experienced by $\bar\pi^t$ under the utilities $\vu^t$ in \Cref{th:normalform}. That is,
\begin{align}
    \bar V(\phi) = \frac1T \sum_{t=1}^T \sum_{\va \in \cA} \bar\pi^t(\va) \ind{\phi(\va) = \vu^t} = \frac1T \sum_{t=1}^T \sum_{\vx \in \cX} \pi^t(\vx) \Pr_{\va \sim p_\vx}[\phi(\va) = \vu^t]
\end{align}

By \Cref{th:normalform}, there exists a function $\bar\phi : \cA \to \cA$ such that\footnote{Technically $\phi$ is a random variable dependent on $\vu^1, \dots, \vu^T$.} $\E[\bar V(\bar\phi) - \bar V(\Id)] \ge 1/Cd^6$. \noah{added defn of $\phi$} 
We define a deviation $\phi : \cX \to \co \cX$ by setting\footnote{Note that if a profitable deviation $\cX \to \co \cX$ exists, then by linearity, so must a profitable deviation $\cX \to \cX$.} $\phi(\vx) := \E_{\va \sim p_\vx} \psi(\bar\phi(\va)).$

It suffices to show that $\E[V(\phi) - V(\Id)]$ is large. To do this, we will show that, in expectation and up to small errors, $V(\Id) \le \bar V(\Id)$ and $V(\phi) \ge \bar V(\bar\phi)$.

For the first approximation, we have
\begin{align}
    V(\Id) &= \frac{1}{T} \sum_{\vx \in \cX} \sum_{t=1}^T \pi^t(\vx) \ip{\vx, \tilde \vu^t}
    \\&\le \frac{1}{T} \sum_{\vx \in \cX} \sum_{t > t_\vx} \pi^t(\vx) \ip{\vx, \tilde \vu^t} + \delta
    \\&\le \frac{1}{T} \sum_{\vx \in \cX} \sum_{t > t_\vx} \pi^t(\vx) p_\vx(\vu^t) + \eps + \delta
    \\&\le \frac{1}{T} \sum_{\vx \in \cX} \sum_{t=1}^T \pi^t(\vx) p_\vx(\vu^t) + \eps + 2 \delta = \bar V(\Id) + \eps + 2 \delta,\label{eq:vid}
\end{align}
where the first and third inequalities use \Cref{lem:delta-bound}, and the second inequality uses \Cref{lem:epsilon-bound}. 
For the second, conditional on the event in \Cref{lem:concentration}, we have \noah{changed some things below (e.g. since $\phi$ wasn't defined before)}
\begin{align}
V(\phi) &= \frac{1}{T} \sum_{\vx \in \cX} \sum_{t=1}^T \pi^t(\vx) \ip{\phi(\vx), \tilde\vu^t}
\\&\ge \frac{1}{T} \sum_{\vx \in \cX} \sum_{t > t_\vx} \pi^t(\vx) \ip{ \phi(\vx), \tilde\vu^t} - \delta
\\&\ge \frac{1}{T} \sum_{\vx \in \cX} \sum_{t > t_\vx} \pi^t(\vx) \Pr_{\va \sim p_\vx}[\bar\phi(\va) = \vu^t] - \eps - \delta
\\&\ge  \sum_{\vx \in \cX} \frac{1}{T} \sum_{t=1}^T \pi^t(\vx) \Pr_{\va \sim p_\vx} [\bar \phi(\va) = \vu^t]- \eps - 2\delta = \bar V(\bar\phi) - \eps - 2\delta,
\end{align}
where the first and third inequalities use \Cref{lem:delta-bound}. To establish the second inequality above, we note that, 
\begin{align}
    \ip{ \phi(\vx), \tilde\vu^t} = \E_{\va \sim p_\vx} \ip{\psi(\bar\phi(\va)), \tilde\vu^t} \ge \Pr_{\va \sim p_\vx}[\bar\phi(\va) = \vu^t] - \eps
\end{align}
by \Cref{lem:concentration}, since $\bar\phi(\va), \tilde\vu^t \in \cA$. Thus, accounting for the probability $\delta$ in which \Cref{lem:concentration} fails, we have
\begin{align}
    \E[V(\phi) - \bar V(\bar\phi)] = \underbrace{\E[V(\phi) - \bar V(\bar\phi)|F]}_{\ge -\eps - 2\delta} \cdot \underbrace{\Pr[F]}_{\le 1} + \underbrace{ \E[V(\phi) - \bar V(\bar\phi) | \neg F] }_{\ge -1} \cdot \underbrace{\Pr[\neg F]}_{\le \delta} \ge - \eps - 3\delta\label{eq:vphi}
\end{align}
where $F$ is the event in \Cref{lem:concentration}.
Combining \Cref{eq:vid,eq:vphi},
\begin{align}
  \E[V(\phi) - V(\Id)] \ge \E\qty[\bar V(\bar\phi) - \bar V(\Id)] - 2 \eps - 5 \delta \ge \frac{1}{Cd^6} - 2 \eps - 5 \delta \ge \frac{1}{4Cd^6} = \eps\label{eq:vphi-vid}
\end{align}
by setting the parameters 
\begin{align}
    \eps = \frac{1}{4Cd^6}, \qq{and}
    n = \frac{2\log 20CM^2 d^6}{\eps^2} \qq{so that}  5\delta = 5M^2e^{-n\eps^2/2} \le \frac{1}{4Cd^6}.
\end{align}

\br{this is already captured by $\delta$.}\noah{I believe I was thinking of something slightly different here, namely second-to-last paragraph on page 7? (Perhaps one could combine that failure event with $\delta$, though the way things are written that isn't really being done...)}
The resulting tree-form decision problem hence has dimension $m = d \cdot n = \cO(\log^{14} M)$, and since $d = \Theta(\eps^{-{1/6}}) \le \cO(\log T)$ we have that the swap regret is at least $\eps$ for all $T < \min\qty{M/4, \exp(\Omega(\eps^{-1/6}))}$, where $M = \exp(\Omega(m^{1/14}))$, as desired.

\section{Conclusion and Future Research}
By extending a recent lower bound for normal-form games, we have established a lower bound that precludes the existence of fully polynomial-time (\ie, $\poly(m, 1/\eps)$) algorithms for swap regret minimization in tree-form decision problems. 

Our result leaves open several natural questions for future research.
\begin{enumerate}
    \item Our counterexample applies only for extensive-form games with a particular structure. Is swap regret minimization also information-theoretically impossible for simpler structures, such as {\em single-stage} Bayesian games, in which the strategy set is a product of simplices?
    
    \item Are there uncoupled learning dynamics that yield $\poly(m, 1/\eps)$-time convergence to NFCE when applied in games? Our result does not preclude this possibility, since the behavior of the adversary in \Cref{th:efg-lower} is likely not the behavior of any learning agent in a game. 
        
    \item What is the complexity of {\em computing} one NFCE in an extensive-form game (by any method, not just limited to independent learning dynamics)? This is a problem that was stated as early as \citet{Papadimitriou05:Computing} and \citet{Stengel08:Extensive} and remains open.
\end{enumerate}
    Most correlated notions of equilibrium that {\em are known} to be efficiently computable have corresponding efficient no-regret learning algorithms with convergence guarantee of the form $\poly(m, 1/\eps)$. For example, the notions of correlated equilibrium mentioned in the introduction (up to and including linear correlated equilibria)  admit both efficient no-regret algorithms~\cite{Zhang23:Mediator} and efficient algorithms for exact computation~\cite{Farina24:Polynomial}. We thus believe that our main result, which precludes efficient no-regret learning for swap regret, is evidence against the existence of an efficient algorithm (learning dynamics or otherwise) for computing an NFCE in an extensive-form game. Proving or disproving this claim remains an important open question.

\section*{Acknowledgements}

This material is based on work supported by the Vannevar Bush Faculty
Fellowship ONR N00014-23-1-2876, National Science Foundation grants
RI-2312342 and RI-1901403, ARO award W911NF2210266, and NIH award
A240108S001. Brian Hu Zhang is also supported by the CMU Computer Science Department Hans Berliner PhD Student Fellowship. Constantinos Daskalakis is supported by NSF Awards CCF-1901292, DMS-2022448, and DMS2134108, a
Simons Investigator Award, and the Simons Collaboration on the Theory of Algorithmic Fairness. Noah Golowich is supported by a Fannie \& John Hertz Foundation Fellowship and an NSF Graduate Fellowship.

\bibliographystyle{plainnat}
\bibliography{dairefs}

\end{document}